\newtheorem{thm}{Theorem}
\newtheorem{prop}{Proposition}
\newtheorem{rem}{Remark}
\newtheorem{ex}{Example}
\newenvironment{proof}{\noindent{\bf Proof:}\enspace}{{\vrule height6pt width4pt
depth3pt}\par\vspace{\medskipamount}\bigbreak}
\newcommand{\st}{s.t. & \begin{tabular}[t]{rcl}}
\newcommand{\hsig}{{\hat \sigma}}
\newcommand{\hmu}{{\hat \mu}}
\newcommand{\bmu}{{\bar \mu}}
\newcommand{\bsig}{{\bar\sigma}}
\newcommand{\Arob}{{{\cal A}_{\mathrm{rob}}}(w_0)}
\newcommand{\uopt}{u_{opt}}
\newcommand{\Heps}{H-\epsilon}
\newcommand{\drifts}{\mathcal{U}_{\epsilon}}
\begin{document}

\title {  The Robust Merton Problem of an Ambiguity Averse Investor}
\author{Sara Biagini and  Mustafa \c{C}. P{\i}nar
\thanks{sara.biagini@ec.unipi.it, Department of Economics and Management, University of Pisa, 56100 Pisa, Italy; mustafap@bilkent.edu.tr. Department of Industrial Engineering
Bilkent University  06800 Bilkent, Ankara, Turkey.}} \maketitle

\begin{abstract}
We derive a closed form  portfolio optimization rule for an  investor who is  diffident about mean return  and volatility estimates, and has a CRRA utility. The  novelty is that confidence is here represented using ellipsoidal uncertainty sets for the drift, given a volatility realization.   This specification affords a  simple and concise analysis, as the optimal portfolio allocation policy is shaped by a rescaled market Sharpe ratio, computed under the worst case volatility. The result is  based on a  max-min Hamilton-Jacobi-Bellman-Isaacs PDE,  which extends the classical Merton problem  and reverts to it for an ambiguity-neutral investor.

\noindent {\bf Keywords}: Robust optimization,  Merton problem, volatility uncertainty, ellipsoidal uncertainty on mean returns, Hamilton-Jacobi-Bellman-Isaacs equation.

\noindent {\bf AMS subject classifications:} 91G10, 91B25, 90C25, 90C46, 90C47

\noindent{\bf Acknowledgements:} We  sincerely  thank  an anonymous Associate Editor   for his comments and suggestions, Fausto Gozzi  and Paolo Guasoni.  Part of this research has been conducted  while  Sara  Biagini was visiting   the London School of Economics and Political Sciences, and  special thanks go to  Constantinos Kardaras   for a number of precious  conversations on the topic.
\end{abstract}

\section{Introduction  }
Traditionally, financial modelling heavily relies on the choice of an underlying probability measure $P$, which is chosen to incorporate the statistical and stochastic nature of market price movements. As early back as the works  of Bachelier, Samuelson  and Black, Scholes and Merton,   the underlying risk factors---such as stock prices or interest rates---have been  modeled as Markovian diffusions (with possible jumps) under $P$. However,  as has become quite agreed upon, the complexity of the global economic and financial dynamics render impossible the precise identification of  the probability law of the evolution of the risk factors. Unavoidably, financial modelling is inherently subject to model uncertainty, which also appears under the appellation of Knightian uncertainty. \\
\indent In the presence of model uncertainty, one may admit various degrees of severity. One may deal with model misspecification only at the  level of the equivalence class of $P$, or go beyond and take into account a family of non dominated models.      The  core issue  of  portfolio  optimization has  been  widely investigated over the  last twenty years in the multiple priors context.  The investor has a pessimistic view of the odds, and takes a max-min  (also known as \emph{robust}) approach to the problem, first minimizing a utility functional over the priors and  afterwards  maximizing over the investment strategies.    We are aware of only a few   results  in the non dominated case, notably Hern\`{a}ndez-Hern\`{a}ndez and Schied \cite{hs} and the recent preprints by Nutz \cite{nutz} in full generality but discrete time,  and  Lin and Riedel \cite{riedel} in a  diffusion context.  On the contrary, in the dominated priors case  there is a rich  literature. We content ourselves with citing Chen and Epstein \cite{chenep}, Garlappi et alii \cite{gar}, Maenhout \cite{maenhout},  F\"{o}llmer et alii \cite{fsw} for a comprehensive review and references, and, more recently, the work by  Owari \cite{owari}.  \\
\indent  In such an active environment,  the present note offers a    resolution of the  robust non-dominated  Merton problem,     which is both simple and mathematically rigorous.  The main novelties  of the present contribution  lie  in the form of the uncertainty set   and  in the    accommodation for market incompleteness.  We assume that  the asset prices process  is  an $N$-dimensional diffusion,  and the   driving Wiener process is $d$-dimensional with $d\geq N$.   The investor is diffident about the  constant drift and volatility estimates  $\hat \mu$  and $\hsig$. Thus, she considers as plausible all the variance-covariance matrices lying in a given  compact set and,  for a given realization of $\sigma$, she considers  all the  drifts which take values in a   ellipsoid centered at $\hat \mu$:
$$
U_\epsilon(\sigma) = \{ u \in \mathbb{R}^N\mid  {(  u  - \hat \mu)' {(\Sigma)^{-1} }(  u  - \hat \mu)}\leq \epsilon^2 \},
$$
in which $\epsilon>0$ is the\emph{ radius of ambiguity} and $\Sigma =\sigma \sigma'$ is the variance-covariance matrix.  \\
 \indent The merits of an ellipsoidal representation for the ambiguous drifts has been amply demonstrated and discussed in \cite{gar},   \cite{goldiyen} for the  robust mean-variance optimization. The problem of worst-case
(max-min)
robust
portfolio choice is a well-studied problem (see e.g.,  \cite{delageye,demig1,fabbook,fab2,fabr,gar,goldiyen,pinatut,pop} for robust portfolio optimization in single period problems) under different representations of
ambiguity. Intuitively, the non-linear but simple geometry of ellipsoids
offers robustness that avoids a worst case which is a corner solution.  This  is the case  in a
polyhedral hyper-rectangle or box representation, as in  Lin and Riedel where the drift (as well as volatility)  is allowed to vary in a box $[\underline{\mu}, \overline{\mu}]$.   In the  dominated setup, the assumption of $k$-ignorance  in \cite{chenep} also amounts to a box representation for the drift. \\
\indent   At the same time, the choice of $U_\epsilon(\sigma)$  preserves
tractability. Citing Fabozzi et al. \cite{fabbook}:
``The coefficient realizations are assumed to be close to the forecasts, but
they may deviate. They are more likely to deviate from their (instantaneous) means if
their variability (measured by their standard deviation) is higher, so
deviations from the mean are scaled by the inverse of the covariance
matrix of the uncertain coefficients. The parameter $\epsilon$ corresponds to the
overall amount of scaled deviations of the realized returns from the
forecasts against which the investor would like to be protected.'' \\
\indent  Another appealing feature of taking  model uncertainty into account is that it  offers a theoretical solution to the equity premium puzzle. As noted by Mehra and Prescott \cite{mp}, the high levels of  historical equity premium  and  the simultaneous  moderate equity demand   seem  to be implied by unreasonable levels of risk aversion.  Their conclusion was skeptical on the ability of a frictionless Arrow-Debreu economy to account for such empirical evidence. However, the works by   Abel \cite{a} and Cecchetti, Lam and Mark \cite{clm}  addressed  the equity premium puzzle by relaxing the hypothesis that the investor perfectly knows the probability law.  The key point is that, in the multiple priors setup, the optimal equity demand depends on \emph{two aversion components}: risk and ambiguity aversion.    In accordance to these results, and the subsequent   \cite{chenep}, \cite{maenhout} and \cite{riedel},  we find that robustness of decisions \emph{lowers}   the  optimal demand of equity since the ambiguity and risk averse investor effectively behaves like a risk averse investor with an increased risk aversion coefficient. Precisely, in a CRRA utility case with relative risk aversion parameter $R$,  the optimal relative portfolio is given by
$$ \pi_\epsilon = \frac{(\overline{H}- \epsilon)^+ }{R \overline{H}}  \overline{\Sigma} ^{-1} (\hat \mu - r {\bf 1}) $$
 in which  $ \overline{\Sigma}$ is the worst case variance-covariance matrix, and $\overline{H}$ is the Sharpe ratio computed under $\overline{\Sigma}$ (see Proposition 1, 2 and Section 4).
So, when the ambiguity radius  is too high, namely it exceeds the worst case Sharpe ratio,   the investor refrains  from investing in the risky assets and puts all the money in the safe asset. The opposite case is when there is no uncertainty in the drift, that is  $\epsilon =0$, and  no uncertainty on volatility as well: then the optimal solution reverts to the Merton relative portfolio, $\pi_M: =    \frac{1}{R}  {\Sigma} ^{-1} (\hat \mu - r {\bf 1}) $.\\
  \indent The paper is organized as follows.  Section 2  contains the (diffusion) model specifications in  the non dominated case, together with the general version of the Martingale Principle needed here.  However, to derive an abstract max-min PDE from the Martingale Principle, some conditions on the volatility structure must be imposed in the fully incomplete market model, as in \cite{hs} for the case of a traded asset with coefficients  depending on an underlying, non traded, asset.  The focus here is  on the complete market case, to wit  the volatility is a square matrix. This allows for a simpler,  yet effective, analysis.  In fact,   the HJB-Isaacs  PDE formulation   shows that the investor  is observationally equivalent to one who has distorted, worst case, beliefs on the parameters.  In Section 3, a representative investor with   CCRA  utility is considered,  and  we assume  that   ambiguity is present only in the drift, i.e.  the priors are all equivalent. This is an interesting case per se, since the drift is subject to imprecision in estimations to a  much greater extent than volatility.  There, we solve  and provide the explicit solutions to the robust problems    for the infinite and finite horizon planning. Finally, we apply these findings in  Section 4  to  some examples in the genuinely non dominated setup.

\section{The   Merton problem under  ambiguity aversion}
Consider the problem of an agent investing in $n$ risky assets
and a riskless asset.  Specifically, we work under the Black-Scholes-Merton market model assumptions. Namely,   the riskless rate $r$ is  constant and the risky  assets dynamics are, for each  $i=1,\ldots,d$:
\begin{equation}
d S_t^i = S_t^i \left(\mu^i d t +\sum_{j=1}^d \sigma^{ij} d W_t^j\right )
\end{equation}
where
$\sigma^{ij}$ and $\mu^i$ are constants and $W$ is a standard, $d$-dimensional Brownian motion on a filtered space $(\Omega, (\mathcal{F}_{t})_{t\geq 0}, P)$. Assume that $d\geq n$, so the market is allowed to be incomplete.   In matrix-vector form,
the above equation becomes:
\begin{equation*}
dS_t = Diag(S_t)( \mu dt + \sigma  dW_t)
\end{equation*}
where by $Diag(S_t)$ we denote the diagonal $n\times n$  matrix with $i$-th diagonal element equal to $S^i_t$. In addition,  $\sigma$ is required to have full rank,  so that the variance-covariance matrix $\Sigma = \sigma\sigma' $ is  invertible. Here and in what follows,  $'$ denotes the transpose operation. \\
 \indent Given the initial endowment $w_0$,  the investor is allowed to trade and consume in a self-financing way. To be explicit, let   $h=(h_t)_t$ denote  the $n$-dimensional progressively measurable process, representing the number of shares of each asset held in portfolio,  and     let the progressively measurable, nonnegative, scalar process  $c$ indicate  the consumption stream.  Assume  also that $\int_0^\cdot c_s ds$ is finite $P-$a.s.   Then, the wealth process is governed by the following stochastic differential equation:
$$
dw_t =  (r w_t + h'_t  Diag(S_t)  (\mu-r {\bf 1})   -c_t) dt + h'_t Diag(S_t) \sigma  dW_t
$$
in which ${\bf 1}$ is the $d$-vector with all components equal to one.  It is convenient to recast the wealth equation  by the vector process $\theta$ of cash value allocated in each risky asset, i.e. $\theta_t := Diag(S_t)h_t$. Thus,
\begin{equation}\label{wealth1}
dw_t =  (r w_t +  \theta'_t  (\mu-r {\bf 1})     -c_t) dt + \theta' \sigma  dW_t.
\end{equation}
The pair $(\theta_t,c_t)$ is admissible for the initial
wealth $w_0$ if the wealth process $w_t$ given by (\ref{wealth1}) remains
 $P$-a.s.  non-negative at all times.
Let ${\cal A}^{P}(w_0)$ be the set of all admissible $(\theta,c)$ pairs for initial wealth $w_0$. Note that the admissible set depends only on the equivalence class of $P$.
 The agent is then trying to choose $(\theta,c) \in {\cal A}^P(w_0)$,  so as to maximize the expected utility   from  running consumption and terminal wealth:
 $$
\sup_{(\theta, c) \in {\cal A}^P(w_0) }\mathbb{E}[\int_0^Tu(t,c_t)dt + u(T,w_T)].
$$
The utility function $u$ is assumed to be concave, increasing in the second argument and measurable in the first. This class of stochastic control problems is known under the name of Merton problem. It includes a number of specific cases, among which the infinite horizon planning. In fact,   if  $ T= \infty$, $w_\infty: = \limsup_{t\rightarrow \infty} w_t$ and $ u(\infty, \cdot) = 0$ the above optimization problem becomes:
 $$ \sup_{(\theta, c) \in {\cal A}^P(w_0) }\mathbb{E}[\int_0^\infty u(t,c_t)dt]. $$
So far, the exposition is classical, and can be found in many textbooks.  The reader is referred to
the new   \cite[Chapter 1]{rogers},
 for a remarkably didactic approach.  \\
 \indent However, things change quite a bit if the agent is diffident about
the (constant)    estimates $\hat \mu $  and full rank matrix $\hsig$, for the drift and  volatility matrix  of  the risky assets respectively.   Assume from now on that $\Omega$ is the Wiener space of continuous functions, with the natural filtration $\mathbf{F}=(\mathcal{F}_t)_{t\geq 0}$.   Our investor
  assumes that the `true' volatility $\sigma$ is a progressively measurable matrix, and  such that the  variance-covariance   $\Sigma =\sigma \sigma'$  takes  values in  some fixed compact set $K $ of $n\times n$ invertible matrices,  containing $\hat \Sigma$:
  $$ {S}:= \{ \sigma \in \mathbb{R}^{n\times d} \mid   \Sigma \in K \}.  $$
  Let us denote by $\mathcal{S} =\{ \sigma \text{ progr mis}\mid \sigma_t(\omega)  \in S  \text{ for all } \omega, t \} $. This choice is in line  with empirical practice, as $\Sigma$ is the estimated object, not the volatility $\sigma$.
The uncertain drift  is also assumed to be progressively measurable,   and for a given realization of $\sigma$  it is  allowed to vary in
$$
U_\epsilon(\sigma) = \{ u \in \mathbb{R}^n\mid  {(  u  - \hat \mu)' { \Sigma ^{-1} }(  u  - \hat \mu)}\leq \epsilon^2 \text{ for all } \omega, t \},
$$
that is, in an   ellipsoid centered at $\hat \mu$   with radius  $\epsilon$. Let us denote the set of plausible drifts and volatilities  by
 $$ \Upsilon: =\{ ( \mu, \sigma) \text{ progr. meas. } \mid  \sigma \in \mathcal{S},    \mu_t(\omega)  \in U_\epsilon(\sigma_t(\omega))  \}$$
 and let $ \Upsilon_{\sigma}$ denote its $\sigma$-section.  Different choices of $(\mu, \sigma) \in \Upsilon$  correspond to considering  different  probabilities  on the Wiener space, namely those under which the risky assets evolve with the prescribed coefficients.  These probabilities are orthogonal to each other across the sections  $\Upsilon_{\sigma}$.\\
\indent For a fixed choice of  the   process  $\sigma$  however,  the Girsanov theorem  ensures that  all the    vector processes $\mu \in  \Upsilon_{\sigma}$  correspond to  probabilities that are equivalent to each other     on $\mathcal{F}_t$ for all $t>0$.   We would like to describe these equivalent changes of measure  as a \emph{function} of $\mu$. To this end,    let us select a reference  probability  corresponding to $(\hmu , \sigma)$  and call it $P^{\hmu, \sigma}$.   Market   incompleteness implies that  the probability in $\Upsilon_\sigma$ under which the risky assets evolve with drift $\mu$  is not unique. However,  such measures  can be fully  parametrized as  probability  changes  with respect to  $P^{\hmu,\sigma}$.
 And a minimal choice    (see also Remark \ref{nonconvex} below) is selecting for each $\mu$ the probability $P^{\mu, \sigma}$ corresponding to the measure change  given by the
  Dol\'{e}ans exponential $\mathcal{E}(\int_0^\cdot \varphi^\mu dW) $, where $$ \varphi_t^\mu:= \sigma_t'  \Sigma_t ^{-1}(\mu_t -\hat \mu).    $$
Such  selection does not reduce generality, since what matters in our context of expected utility maximization are only the distributional properties of the risky assets.  Therefore, we have now a one-to-one correspondence between  elements of $\Upsilon$ and probabilities on $(\Omega, \mathbf{F})$, namely   the possible prior models are  given by
 $$ \mathcal{P} = \{  P^{\mu, \sigma}  \mid (\mu, \sigma)\in \Upsilon \}.  $$
Now, the wealth process evolves under each  ${P}^{\mu, \sigma} $ according to
\[
dw_t = (r w_t +\theta_t'(\mu_t-r {\bf 1}) - c_t) dt + \theta'_t  \sigma_t  d{W}
\]
where $ {W}$ is a ${P}^{\mu,\sigma}$-standard Brownian motion.  Finally,  let us  call the investment/consumption pair $(\theta , c )$   \emph{(robust)  admissible}
for the initial positive wealth $w_0$ if in addition to the measurability and integrability assumptions already made at the beginning of this section,  the wealth process remains non-negative for all  $P \in \mathcal{P}$:
 $$\Arob : = \cap_{P \in \mathcal{P}}{ \cal{A}}^P(w_0) =  \cap_{\sigma\in\mathcal{S} } {\cal{A}}^{P^{\hmu, \sigma}}(w_0). $$
The equality on the rhs holds because, for a given $\sigma$, the   admissible class is invariant for different choices of $\mu \in \Upsilon_\sigma$.
 The ambiguity averse  investor takes a prudential worst case approach, and    faces the following
  robust  Merton  problem:
 \begin{equation}\label{robmer}
u_{opt}(w_0): =\sup_{(\theta,c)\in  \Arob} \inf_{(\mu, \sigma)\in \Upsilon}  \mathbb{E}^{(\mu,\sigma)} \left [\int_0^T u(t,c_t)dt + u(T, w_T) \right ].
 \end{equation}
 It is clear that  more conservative portfolio choices are made when the  uncertainty set $\Upsilon$
 is larger, while an ambiguity-neutral investor   sets  $\mathcal{S} =\{\hsig\}$ and  $\epsilon$ equal to zero, thus facing  a classical Merton problem for  the model $P^{\hmu, \hsig}$.
\begin{rem}\label{nonconvex}
 Fix $\sigma \in \mathcal{S}$. Any change of drift, say from $\hat \mu$ to $\mu$,  corresponds to a change of measure from $P^{\hmu, \hsig}$ to a probability $\widetilde{P}$ with density process $ \frac{d\widetilde{P}}{dP}$ given by a  Dol\'{e}ans exponential $ \mathcal{E}(\int_0^\cdot \varphi dW)$.  The suitable $\varphi$s can be characterized as those in the form:
 $$ \varphi_t =  \varphi_t^\mu + \psi_t  $$
 in which $\psi$ is a (sufficiently integrable) progressively measurable process, belonging  $dt \otimes dP^{\hmu, \sigma}$-a.e. to $\mathrm{ker}(\sigma_t(\omega))$.
   Chen and Epstein \cite{chenep} call the process $\varphi $  the \emph{market price of ambiguity.}   Denote this class of probabilities by $\mathcal{P}^{\sigma}$.  Elementary optimization shows  $\varphi^\mu$  is minimal, in the sense that it has the smallest pointwise $ \mathbb{R}^n$-norm among $\mathcal{P}^{\sigma}$:
   $$  \|\varphi^\mu_t(\omega)\|^2 =   \min_{\widetilde{P} \in \mathcal{P}^\sigma} \|\varphi_t(\omega)\|^2  \ \ \  dt\otimes dP^{\hmu, \sigma}- a.e. $$
Thus, the ellipsoidal ambiguity on the drift in $\Upsilon_\sigma$ can be recast into a (non convex!) condition on  the market price of uncertainty $\varphi$, namely
$$ \mu \in \Upsilon_\sigma  \ \text{ iff } \  \min_{\widetilde{P} \in \mathcal{P}^\sigma} \|\varphi_t(\omega)\|^2 \leq \epsilon^2   \ \ dt\otimes dP^{\hmu, \sigma}- a.e.  $$
This should be contrasted with the ubiquitous  conditions in the literature (see e.g., \cite{hs}, \cite{chenep}) which require the market price of uncertainty to be valued in a convex set.
\end{rem}
The resolution of \eqref{robmer} is based on the next robust version of the  verification theorem.  Although formulated for the sets $\mathcal{P}, \Upsilon$, its validity is general and does not rely on any  specific parametrization of the set of prior models.
\begin{thm}\label{verif}
For a shorthand, call $\nu=(\mu, \sigma)$ a generic element of $\Upsilon$.  Suppose that:
\begin{enumerate}
  \item there exists a function $V:[0,T] \times \mathbb{R}^+ \rightarrow \mathbb{R}$,  which is continuous on  $[0,T]\times \mathbb{R}^+$ and   $C^{1,2}$ on $[0,T)\times \mathbb{R}^+ $,  verifying  $V(T,.)=u(T,.)$;
  \item for any $(\theta, c)$ there exists an optimal solution $\nu(\theta, c) \in \Upsilon$ of the inner minimization in \eqref{robmer}, such that
  \begin{equation}\label{y}
Y_t = Y_t^{(\theta, c)} \equiv V(t,w_t) + \int_0^t u(s,c_s)ds
\end{equation}
is a $P^{\nu(\theta, c)}$-supermartingale;
  \item  there exist some $(\bar \theta, \bar c) \in \Arob $ such that the corresponding $ \overline{Y} $ is a $P^{\nu(\bar \theta, \bar c)} $- martingale.
\end{enumerate}
Then $(\bar \theta , \bar c)$ is optimal for the problem \eqref{robmer} and   $V(0,w_0)$ is the optimal value function, namely $\uopt(w_0) =V(0,w_0)$.
\end{thm}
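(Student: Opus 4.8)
The plan is to prove the two inequalities $V(0,w_0)\ge\uopt(w_0)$ and $V(0,w_0)\le\uopt(w_0)$; together they give $V(0,w_0)=\uopt(w_0)$, and the second one will simultaneously exhibit $(\bar\theta,\bar c)$ as a maximizer. At this level no It\^o formula or PDE manipulation is needed: the dynamic‑programming content is entirely contained in hypotheses 2 and 3, so the argument is a ``sandwich'' between a supermartingale inequality and a martingale identity. It is convenient to abbreviate the inner objective by
\[
\J(\theta,c;\nu):=\mathbb{E}^{\nu}\!\left[\int_0^T u(s,c_s)\,ds+u(T,w_T)\right],
\]
where, for a given $\nu=(\mu,\sigma)\in\Upsilon$, $w$ denotes the wealth generated by $(\theta,c)$ under $P^{\nu}$; with this notation \eqref{robmer} reads $\uopt(w_0)=\sup_{(\theta,c)\in\Arob}\inf_{\nu\in\Upsilon}\J(\theta,c;\nu)$, and hypothesis 2 says precisely that $\J(\theta,c;\nu(\theta,c))=\inf_{\nu\in\Upsilon}\J(\theta,c;\nu)$.

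\emph{First inequality.} Fix an arbitrary $(\theta,c)\in\Arob$, put $\nu=\nu(\theta,c)$, and let $Y$ be the process \eqref{y} with $w$ the wealth driven by $(\theta,c)$ under the worst‑case model $P^{\nu}$. The terminal value $Y_T=u(T,w_T)+\int_0^T u(s,c_s)\,ds$ is well defined because $V$ is continuous up to $t=T$ and $V(T,\cdot)=u(T,\cdot)$, and it is $P^{\nu}$‑integrable since $Y$ is a $P^{\nu}$‑supermartingale. Applying the supermartingale inequality between $0$ and $T$ (and using that $\mathcal{F}_0$ is trivial) gives
\[
V(0,w_0)=Y_0\ \ge\ \mathbb{E}^{\nu}[Y_T]\ =\ \J(\theta,c;\nu)\ =\ \inf_{\nu'\in\Upsilon}\J(\theta,c;\nu'),
\]
the last equality being hypothesis 2. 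Since $(\theta,c)\in\Arob$ was arbitrary, taking the supremum yields $V(0,w_0)\ge\sup_{(\theta,c)\in\Arob}\inf_{\nu'}\J(\theta,c;\nu')=\uopt(w_0)$.

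\emph{Second inequality and optimality.} Run the same computation for the pair $(\bar\theta,\bar c)\in\Arob$ of hypothesis 3, with $\bar\nu=\nu(\bar\theta,\bar c)$ and $\overline{Y}$ the associated process. This time $\overline{Y}$ is a genuine $P^{\bar\nu}$‑martingale, so the inequality above becomes an equality:
\[
V(0,w_0)=\overline{Y}_0=\mathbb{E}^{\bar\nu}[\overline{Y}_T]=\J(\bar\theta,\bar c;\bar\nu)=\inf_{\nu'\in\Upsilon}\J(\bar\theta,\bar c;\nu')\ \le\ \sup_{(\theta,c)\in\Arob}\inf_{\nu'\in\Upsilon}\J(\theta,c;\nu')=\uopt(w_0).
\]
Hence $V(0,w_0)\le\uopt(w_0)$, and combined with the first inequality this forces $V(0,w_0)=\uopt(w_0)$. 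Moreover the displayed chain shows that this common value is attained at $(\bar\theta,\bar c)$, i.e. $(\bar\theta,\bar c)$ is optimal for \eqref{robmer}.

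\emph{Main difficulty.} There is no deep obstacle once hypotheses 1--3 are granted — which is also why the theorem holds without any reference to how $\mathcal{P}$ is parametrized by $\Upsilon$. The only points that require attention are: (i) the passage to the terminal time $T$ in the (super)martingale relations, where one leans on the continuity of $V$ up to $t=T$, the identification $V(T,\cdot)=u(T,\cdot)$, and the integrability built into the words ``supermartingale''/``martingale''; and (ii) the use of \emph{exact} inner minimizers $\nu(\theta,c)$ (not merely $\varepsilon$‑optimal ones), which is what makes both chains of (in)equalities pinch so that the outer supremum over $(\theta,c)$ can be passed through cleanly.
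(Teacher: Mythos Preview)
Your proof is correct and follows essentially the same route as the paper's: use the supermartingale inequality under the worst-case model $P^{\nu(\theta,c)}$ to get $V(0,w_0)\ge\uopt(w_0)$, then the martingale equality for $(\bar\theta,\bar c)$ to get the reverse inequality and identify the optimizer. Your write-up is somewhat more explicit than the paper's (in particular you spell out the role of $\nu(\theta,c)$ as the exact inner minimizer, and the passage to $t=T$), but the underlying argument is identical.
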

\begin{proof}
The proof is a simple modification of the
   Davis-Varaiya Martingale Principle of Optimal Control   \cite[Theorem 1.1]{rogers}. In fact, by the supermartingale property of $Y$  under $P^{\nu(\theta, c) }$, and by $V(T,.)=u(T,.)$,  we have:
  $$ \mathbb{E}^{\nu(\theta, c)}[Y_T] = \mathbb{E}^{\nu(\theta, c)}[  \int_0^T u(s,c_s)ds + u(T,w_T)] \leq Y_0 =  V(0,w_0).   $$
  Taking the supremum over $ \Arob$ gives $   \uopt (w_0) = \sup_{\Arob} \mathbb{E}^{\nu(\theta, c)}[  \int_0^T u(s,c_s)ds + u(T,w_T)]\leq V(0,w_0)$.
  Since by assumption for  some $\overline{\nu} = (\bar\theta , \bar c )$ the process $\overline{ Y}$ is a martingale under $P^{\nu(\bar\theta , \bar c)} $, then  $ \mathbb{E}^{\overline{\nu}}[\overline{Y}_T] = Y_0 = V(0,w_0) $ and the conclusions immediately follow.
 \end{proof}

Now, the usage of the  verification theorem to solve the ambiguity-averse investor's problem  is quite intuitive.  Given a specific utility function, one looks for a function $V$ satisfying the
premises of the theorem. Using It\={o}'s formula,   any process  $Y$ as in \eqref{y} verifies under $P^{\nu}$ the following SDE:
 \begin{equation}\label{ydiff}
   d Y_t = \{ u(t,c_t)+ V_t + V_w(r w_t + \theta_t'(\mu_t-r{\bf 1}) - c_t)+ \frac{1}{2}   {\theta_t'\Sigma_t \theta_t} V_{ww}\} dt +  V_w \theta_t' \sigma_t dW.
 \end{equation}
 To make   $Y$ a supermartingale under every $P^{\nu(\theta, c)}$, and a martingale for some $P^{\nu(\theta^*, c^*)}$,
the maximum over $(\theta,c) \in \mathbb{R}^n \times \mathbb{R}^+$ of  the minimum of  $\nu  \in \Upsilon$  must be equated to zero.  At this point, some other specific structure on $\sigma$ must be assumed in the fully  incomplete market case, like e.g., dependence on a correlated nontraded asset as in \cite{hs}.  \\
 \indent In the rest of the paper however,  we focus on the complete market case, i.e. we assume $\sigma$ is a square matrix.   Then,   a max-min nonlinear PDE arises from \eqref{ydiff}:
$$
  \max_{(\theta,c) \in \mathbb{R}^n \times \mathbb{R}^+} \, \min_{ \sigma \in {S}, \mu \in U_\epsilon(\sigma) } \left[u(t,c)+ V_t + V_w(r w + \theta'(\mu  -r{\bf 1}) - c)+ \frac{1}{2} {\theta'\Sigma   \theta}  V_{ww}
\right] =0,$$
which is of the  Hamilton-Jacobi-Bellman-Isaacs type. In the following, we simply refer to it as   the  robust HJB equation.
The minimization of the drifts for  $\sigma$ fixed gives:
$$\min_{\mu \in \mathbb{R}^n } \{\theta' \mu: (\mu-\hat \mu)'\Sigma^{-1}(\mu - \hat \mu) \leq \epsilon^2\},$$
which  is a simple exercise in Karush-Kuhn-Tucker necessary and sufficient conditions.  When $\theta \neq 0$, the optimal solution is
$$\mu(\theta) = \hat \mu - \epsilon \frac{ \Sigma \theta}{\sqrt{\theta'\Sigma\theta} }.$$
 Substituting it back in the  robust HJB, we get
\begin{equation}\label{hjb0}
  \max_{(\theta,c) \in \mathbb{R}^n \times \mathbb{R}^+}  \min_{\sigma \in  S} \left[u(t,c)+ V_t + V_w(r w + \theta'(\hat \mu-r{\bf 1}) - \epsilon \sqrt{\theta'\Sigma\theta} - c)+ \frac{1}{2} {\theta'\Sigma\theta}  V_{ww} \right]=0,
\end{equation}
which covers also the case $\theta=0$.    Since the value function $V$ will be increasing (and concave) in the wealth $w$,  we are minimizing  a concave function over  ${S}$:
\begin{equation}\label{inner-sigma}
  \min_{  \sigma \in  {S}}   \left( - V_w   \epsilon \sqrt{\theta'\Sigma\theta}  + \frac{  V_{ww} }{2} {\theta'\Sigma\theta}  \right ).
  \end{equation}
Notice  that the function to be optimized depends  on $\sigma$ only via  the quadratic form $\theta'\Sigma\theta $, and that its derivative wrt $ y:=\theta'\Sigma\theta $ is positive as $V_w>0, V_{ww}<0$. Therefore,  the optimizers are not unique in general and compactness of $S$ is of essence.     The actual computations depend on the specification of $S$, and may be    quite easy   when the set    $S$ is defined precisely in terms of   constraints  on the quadratic form, as we show in Section \ref{ambsig}.\\
 \indent Let us denote by $\bsig(\theta)$  an optimizer, where $\bsig(\theta)$ is the Cholesky factorization of an optimal varcov matrix $\overline{\Sigma}(\theta)$.   Since  the optimal value of quadratic form $\theta'\overline{\Sigma}(\theta)\theta$  does not depend on the specific choice of $\bsig(\theta)$,     one    obtains  the  PDE
 \begin{equation}\label{hjb1}
  \max_{(\theta,c) \in \mathbb{R}^n \times \mathbb{R}^+}   \left[u(t,c)+ V_t + V_w(r w + \theta'(\hat \mu-r{\bf 1}) - \epsilon \sqrt{\theta' \overline{\Sigma}(\theta)\theta} - c)+ \frac{1}{2} {\theta'\overline{\Sigma}(\theta)\theta}  V_{ww} \right]=0,
\end{equation}

In the main applications we present in Section 3 and  4, $ \overline{\Sigma}(\theta)= \overline{\Sigma}$, namely  a constant.   When this is the case, the above  equation is equivalently viewed as stemming from the worst-case $( \bmu, \bsig)$, in which $\bsig$ is the Cholesky factorization of $\overline{\Sigma}$,    $\bmu = \hat \mu - \epsilon \frac{ \overline{\Sigma} \theta}{\sqrt{\theta'\overline{\Sigma}\theta} }$, and using the worst couple in the wealth equation:
\begin{equation}\label{wcwealth}
d w_t =   (r w_t +\theta_t'(\hat{\mu}_t-r {\bf 1}) - \epsilon \sqrt{\theta_t' \overline{\Sigma} \theta_t}- c_t) dt + \theta'_t  \bsig  d{W}  .
\end{equation}
Therefore,   the  general problem  \eqref{robmer} becomes equivalent to a robust utility maximization when there is (ellipsoidal) uncertainty in the drift only.  \\
 \indent The techniques employed to solve \eqref{hjb1} are then standard, and rely on educated guesses at the form of the solution.  If the solution $V$ of the robust HJB  can be found explicitly, then it is the candidate to be the value function we are looking for. Finally, to conclude that $V$ is indeed the value function, one must check that it verifies items 1, 2 and 3 in Theorem \ref{verif}.

\section{The  robust  power utility problem with non ambiguous $\sigma$}
We assume in this Section  that the investor has a  power utility function, and there that is no uncertainty on the (constant) square volatility matrix, namely $\mathcal{S} = \{ \hsig \}$ .  Lack of uncertainty in the volatility  may be empirically justified by the consideration that mean returns are subject to imprecision to a much higher extent than volatilities.   We provide explicit solutions  both when the planning horizon is   finite and infinite. To avoid notation overload, and for the next usage in Section 4,  \emph{ we drop the hat over} $\sigma$.
Also,  we denote by  $$\drifts:=\{ \mu \text{ progr meas} \mid  \mu_t(\omega) \in U_\epsilon(\sigma) \text{ for all } \omega \}, $$
and by $\mathbb{E}^\mu$ the expectation under $P^{\mu, \sigma}$.
\subsection{The infinite horizon  planning}
\subsubsection{Resolution of  the robust HJB equation}
Let us assume the investor has   CRRA power utility from intertemporal consumption:
$$
u(t,x) = \mbox{e}^{-\rho t} \frac{x^{1-R}}{1-R},
$$
where $\rho$ and $R \neq 1$ are   positive constants, modeling the time impatience rate and relative risk aversion respectively.    In the infinite horizon case, we wish to find the solution of:
\begin{equation}\label{mer1-nosig}
u_{opt}(w_0, \epsilon) =    \sup_{(\theta,c) \in {\Arob} } \inf_{\mu \in \drifts} \mathbb{E}^{\mu} \left[
\int_0^\infty \mbox{e}^{-\rho s} \frac{c_s^{1-R}}{1-R} ds\right],
\end{equation}
when the problem is well-posed, i.e. when it has a finite value\footnote{When $\mathcal{S}=\{\sigma\}$,  we remark that  $\Arob$ coincides with the classic set of admissible plans $\mathcal{A}^{P^{\hmu, \sigma}}(w_0).$}.  \emph{Assume for the moment that  this is the case and also  that both the inner infimum (for a fixed $(\theta,c) \in \Arob$) and the outer  supremum are attained.}
The properties of the problem imply, exactly as in the classic case,
that a guess at the value function takes the form
$$
V(t,w) =  \gamma_{\epsilon}^{-R} u(t, w).
$$
The positive  constant $\gamma_\epsilon$ has to be determined, and we use $\epsilon$ as subscript to
highlight the dependence on the radius of ambiguity $\epsilon$.
  With this guess, let us solve (\ref{hjb1}). The optimization over
  $c$ trivially
results in
$$
\bar{c} = \gamma_\epsilon w,
$$
with
$$
\max_{c} \{u(t,c) - c V_w\} =  {e}^{-\rho t}\frac{R}{1-R}(\gamma_\epsilon w)^{1-R}.
$$
The residual optimization  is
$$\max_{\theta} \left[ {e}^{-\rho t}\frac{R}{1-R}(\gamma_\epsilon w)^{1-R} + V_t +  V_w(r w + \theta'(\hat \mu-r{\bf 1}) - \epsilon \sqrt{\theta'\Sigma\theta}) + \frac{1}{2} {\theta'\Sigma\theta}  V_{ww} \right ].$$

The function to be maximized is concave in $\theta$, and smooth in   $ \mathbb{R}^n\setminus \{0\}$. The first order conditions are thus necessary and sufficient for optimality  in  $\theta \neq 0$. So, by equating the gradient to zero we obtain:
$$
\theta (s)   = \frac{- s V_w}{ s V_{ww} - V_w \epsilon}  \Sigma ^{-1}
(\hat \mu - r {\bf 1}),
$$
where  $s := \sqrt{\theta' \Sigma  \theta}$.  We are left with  $$  s^ 2  = \theta(s)' \Sigma  \theta(s) $$
Set $$H: = \sqrt{(\hat \mu - r {\bf 1})'\Sigma ^{-1} (\hat \mu - r {\bf 1})}, \ \ \ H_\epsilon: = \Heps.$$
  The above equation has  a   positive root, given by:
$$
\bar{s}=-\frac{V_w  H_\epsilon }{V_{ww}}
$$
if and only if   $ H_\epsilon > 0$.    If $  H_\epsilon \leq 0$, the  optimal solution is necessarily $\bar{\theta} =0$.    Finally, if  $H_\epsilon^+$ denotes the positive part of $H_\epsilon$, the following is a compact way of writing  the optimal solution in both cases:
$$
\bar{\theta} =  w \frac{ H^+_\epsilon }{R H}  \Sigma^{-1} (\hat \mu - r {\bf 1})  $$
Now,  $\gamma_\epsilon$ is found by substituting these $\bar{c}$ and $\bar{\theta} $ back into (\ref{hjb1}) and solving for the constant. Straightforward calculations result in:
\begin{equation}\label{gamma}
  \gamma_\epsilon =  \frac{\rho + (R-1)( r   + \frac{1}{2}\frac{(H_\epsilon^+)^2}{R})}{R}
\end{equation}
which for $\epsilon=0$ falls back to the  constant $\gamma_0=  \frac{\rho + (R-1)[ r +   \frac{1}{2}\frac{ H^2}{R}] }{R} $ of the classic case.
Therefore, the value function $V$ of the problem is found as
$$
V(t,w) = \gamma_\epsilon^{-R} u(t,w).
$$
This of course holds \emph{as long as} $\gamma_\epsilon >0$, which is shown below to be a  necessary and sufficient condition for the well posedness of the robust Merton problem.
\subsubsection{The verification and comparison with the classic Merton problem}
\begin{prop} \label{prop1}
 The infinite-horizon robust Merton problem under ellipsoidal ambiguity of mean returns:
$$
u_{opt}(w_0, \epsilon) =    \sup_{(\theta,c) \in {\Arob} } \inf_{\mu \in \drifts} \mathbb{E}^{\mu} \left[
\int_0^\infty \mbox{e}^{-\rho s} \frac{c_s^{1-R}}{1-R} ds\right],
$$
is well posed if and only if $\gamma_\epsilon  $ in \eqref{gamma} is strictly positive. In this case,  the optimal value  is
$$ u_{opt}(w_0, \epsilon)  = V(0, w_0)= \gamma_\epsilon^{-R}   \frac{(w_0)^{1-R}}{1-R}, $$
and the  optimal controls are:
$$
\bar{\theta}_t=    \bar{w}_t \pi_\epsilon   \, ,  \  \bar{c}_{t} = \gamma_\epsilon \bar{w}_t,
$$
with optimal portfolio proportions  vector   given by
$$\pi_\epsilon:= \frac{H^+_\epsilon  }{R H}  \Sigma ^{-1} (\hat \mu - r {\bf 1})$$
 The worst case drift is constant:  $$\bar{\mu}:=\mu(\bar{\theta})= \hat \mu -\epsilon \frac{\Sigma}{\sqrt{\pi'_\epsilon \Sigma\pi_\epsilon}} \pi_\epsilon, $$
and the optimal wealth process has $P^{\bar{\mu},\sigma}$ dynamics given by
\begin{equation}\label{optwealth}
\bar{w}_t = w_0 \exp{\left (\pi_\epsilon \sigma    W_t + (r + \frac{(H_\epsilon^+)^2(2R-1)}{2R^2}-\gamma_\epsilon)t\right)}.
\end{equation}
\end{prop}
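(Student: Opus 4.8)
The plan is to confirm that the candidate $V(t,w)=\gamma_\epsilon^{-R}u(t,w)=\gamma_\epsilon^{-R}e^{-\rho t}w^{1-R}/(1-R)$ produced by the HJB analysis satisfies the three conditions of the verification theorem, Theorem \ref{verif}, in the infinite-horizon form in which the terminal identity $V(T,\cdot)=u(T,\cdot)$ is replaced by the transversality condition $\lim_{t\to\infty}\mathbb{E}^{\nu}[V(t,w_t)]=0$ along the relevant controls. First I would record the easy facts: $V$ is smooth on $[0,\infty)\times\mathbb{R}^+$, it is nonnegative when $R<1$ (nonpositive when $R>1$), the controls $\bar\theta_t=\bar w_t\pi_\epsilon$, $\bar c_t=\gamma_\epsilon\bar w_t$ are the maximizers of \eqref{hjb1} by the first-order conditions already computed, and $\mu(\theta)=\hat\mu-\epsilon\Sigma\theta/\sqrt{\theta'\Sigma\theta}$ is the inner minimizer. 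Since $\gamma_\epsilon>0$ one has $\bar c\ge0$; the closed linear wealth SDE \eqref{wcwealth} (the wealth equation under $P^{\bar\mu,\sigma}$) under these controls integrates to the geometric form \eqref{optwealth}, which is strictly positive, so $(\bar\theta,\bar c)\in\Arob$; and $\bar\mu=\mu(\bar\theta)$ is the claimed constant because $\bar w_t$ cancels in $\Sigma\bar\theta_t/\sqrt{\bar\theta_t'\Sigma\bar\theta_t}$.

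For the inequality $u_{opt}(w_0,\epsilon)\le V(0,w_0)$: given an arbitrary admissible $(\theta,c)$, take $\nu(\theta,c)=(\mu(\theta),\sigma)$ and apply It\={o}'s formula \eqref{ydiff} to $Y_t=V(t,w_t)+\int_0^t u(s,c_s)\,ds$. Because $V$ solves the HJB equation \eqref{hjb1} — a maximum over $(\theta,c)$ set to zero — the finite-variation part of $Y$ is nonpositive, so $Y$ is a local $P^{\nu(\theta,c)}$-supermartingale; when $R<1$, $Y\ge0$, hence a genuine supermartingale, and localizing then letting the horizon tend to infinity (monotone convergence on $\int u$) gives $\mathbb{E}^{\nu(\theta,c)}[\int_0^\infty u(s,c_s)\,ds]\le V(0,w_0)$; since $\mu(\theta)$ minimizes the Hamiltonian drift pointwise it is the inner minimizer, so $\inf_{\mu\in\drifts}\mathbb{E}^\mu[\cdot]\le V(0,w_0)$, and the supremum over $(\theta,c)\in\Arob$ gives the claim. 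For the reverse inequality I would use $(\bar\theta,\bar c)$: either invoke condition 3 of Theorem \ref{verif} — the drift of $\overline Y$ is identically zero, its local-martingale part $\gamma_\epsilon^{-R}\int_0^t e^{-\rho s}\bar w_s^{1-R}\pi_\epsilon'\sigma\,dW$ is a true martingale (square-integrable integrand on each $[0,t]$ by lognormality of $\bar w$), so $\mathbb{E}^{\bar\mu}[\overline Y_t]=V(0,w_0)$, and the transversality identity $e^{-\rho t}\mathbb{E}^{\bar\mu}[\bar w_t^{1-R}]=w_0^{1-R}e^{-\gamma_\epsilon t}\to0$ — a direct computation from \eqref{optwealth} and the definition \eqref{gamma} of $\gamma_\epsilon$ — lets one pass to $t=\infty$ and conclude $\mathbb{E}^{\bar\mu}[\int_0^\infty u(s,\bar c_s)\,ds]=V(0,w_0)$; or, equivalently, evaluate this expectation directly from the same identity. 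A pathwise lower bound on the $P^\mu$-drift of $\log\bar w$, uniform over $\mu\in\drifts$ (from $\pi_\epsilon'(\mu-r{\bf 1})\ge\pi_\epsilon'(\hat\mu-r{\bf 1})-\epsilon\sqrt{\pi_\epsilon'\Sigma\pi_\epsilon}$), shows $\bar\mu$ is the worst case for $(\bar\theta,\bar c)$, so $\inf_{\mu}\mathbb{E}^\mu[\cdot]=V(0,w_0)$ and hence $u_{opt}(w_0,\epsilon)\ge V(0,w_0)$. The two bounds give $u_{opt}(w_0,\epsilon)=V(0,w_0)$ with $(\bar\theta,\bar c)$ optimal, and $\bar\mu$, $\bar w$ as stated.

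For the converse direction of the equivalence I would argue by contraposition. If $\gamma_\epsilon\le0$, take (for $R<1$) the feedback plan $\theta_t=w_t\pi_\epsilon$, $c_t=\delta w_t$ with $\delta>0$; the wealth is again an explicit exponential and $\int_0^\infty e^{-\rho s}\mathbb{E}^{\bar\mu}[w_s^{1-R}]\,ds$ has exponential rate $-R\gamma_\epsilon-(1-R)\delta$. When $\gamma_\epsilon<0$ this is positive for $\delta$ small, so the integral — hence the robust expected utility, again reduced to the worst case $\bar\mu$ by the uniform drift bound — is $+\infty$; when $\gamma_\epsilon=0$ the integral is finite but the resulting value is proportional to $\delta^{-R}$ and blows up as $\delta\downarrow0$. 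Either way $u_{opt}(w_0,\epsilon)=+\infty$, so the problem is ill-posed; the case $R>1$ is analogous with reversed signs (and does not occur when $r\ge0$).

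The delicate points are the passage from Theorem \ref{verif} (finite $T$) to $T=\infty$ and, inside it, the supermartingale requirement of condition 2. The former rests on the transversality identity $e^{-\rho t}\mathbb{E}^{\bar\mu}[\bar w_t^{1-R}]=w_0^{1-R}e^{-\gamma_\epsilon t}$ and on interchanging limit and expectation, which is clean for $R<1$; for $R>1$ the process $Y$ is only bounded above, so establishing the genuine supermartingale property for an arbitrary admissible $(\theta,c)$ and controlling $\mathbb{E}^{\nu}[V(t,w_t)]$ as $t\to\infty$ requires the standard additional uniform-integrability argument. I expect this $R>1$ bookkeeping — together, to a lesser degree, with making the uniform-in-$\mu$ drift comparison rigorous for path-dependent competing drifts — to be the main obstacle; the rest is the explicit lognormal computation largely carried out in the text above.
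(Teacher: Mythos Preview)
Your proof for $0<R<1$ and for the ill-posedness direction follows the paper's argument essentially verbatim (your treatment of the borderline case $\gamma_\epsilon=0$, letting $\delta\downarrow0$ in $c_t=\delta w_t$ so that the value scales like $\delta^{-R}$, is in fact a bit cleaner than the paper's choice $c_t=\frac{k}{1+t}w_t$).

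The genuine gap is exactly the $R>1$ case you flag as ``the main obstacle.'' You propose to push through the supermartingale property of $Y$ via a ``standard additional uniform-integrability argument,'' but there is no such standard argument here: for an arbitrary admissible $(\theta,c)$ the process $Y$ is a \emph{negative} local supermartingale, and negative local supermartingales need not be supermartingales; nor is there any way to control $\mathbb{E}^{\nu(\theta,c)}[V(t,w_t)]$ as $t\to\infty$ for a generic plan. The paper sidesteps the verification route entirely for $R>1$. It keeps the martingale property of $\overline Y$ (which you do establish) to get the leftmost equality, and then closes the loop with the minimax chain
\[
V(0,w_0)=\mathbb{E}^{\bar\mu}\!\Big[\int_0^\infty u(s,\bar c_s)\,ds\Big]\le u_{opt}(w_0,\epsilon)\le \inf_{\mu\in\drifts}\sup_{\Arob}\mathbb{E}^\mu[\cdot]\le \inf_{\mu\in U_\epsilon}\sup_{\Arob}\mathbb{E}^\mu[\cdot],
\]
where the last infimum runs over \emph{constant} drifts $\mu$ in the ellipsoid. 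For each such $\mu$ the inner supremum is a classical Merton problem with value $(\gamma(\mu))^{-R}w_0^{1-R}/(1-R)$, and minimizing $\gamma(\mu)^{-R}/(1-R)$ over $U_\epsilon$ returns $\gamma_\epsilon$. This saddle-point argument replaces the problematic supermartingale step altogether and is the missing ingredient in your proposal.
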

\begin{proof}
The proof is split into two steps.
\begin{enumerate}

  \item  If $\gamma_\epsilon \leq 0$  then  $u_{opt}(w_0, \epsilon) = \infty$. Note first that this case can only happen when $0<R<1$. The proof here closely follows the lines of  \cite[Section 1.6]{rogers}.
            \begin{itemize}
            \item[1-a)] Assume $\gamma_\epsilon <0$. Then, consider a couple of controls that are both proportional to the wealth:
          $$ \theta_t = w_t \pi , \ \ c_t = \lambda w_t \text{ with }  \lambda >0.$$
          If we substitute them into (7), then  the solution is the positive wealth
          $$ \widetilde{w}_t= w_0 \exp{ \left( (r +\pi'(\hat \mu -r {\bf 1} )- \epsilon \sqrt{\pi' \Sigma \pi}- \lambda - \frac{1}{2}\pi' \Sigma \pi )t + \pi'\sigma W_t   \right) } $$
           so that:
          $$ u_{opt}(w_0, \epsilon)  \geq \mathbb{E}^{\mu(\theta)}[\int_0^\infty e^{-\rho t} \frac{\lambda^{1-R}}{1-R} (\widetilde{w}_t)^{1-R} dt]. $$
          An application of the stochastic Fubini's theorem  shows that the latter is proportional to
         $$ \int_0^\infty \exp{\left[ t \left( -\rho + (1-R)(r +\pi'(\hat \mu -r {\bf 1} )- \epsilon \sqrt{\pi' \Sigma \pi}- \lambda - \frac{R}{2}\pi' \Sigma \pi)    \right )   \right]} dt.$$
         If there exist some $(\pi, \lambda)$ for which the exponent is positive, then the integral diverges and  the value function is infinite. For fixed $\lambda$, the maximum over $\pi$ in the exponent is attained for $\bar{\pi} =  {\pi_\epsilon} $, and the value is
         $$-\rho + (1-R) (r + \frac{(H_\epsilon^+)^2}{2R} -\lambda)  = - R \gamma_\epsilon - \lambda (1-R),$$
         which is positive for $\lambda $ small enough.
    \item[1-b)] If $\gamma_\epsilon =0$, take $\theta_t = w_t  \pi_\epsilon$, and $c_t = \frac{k}{1+t} w_t$ for some constant  $k>0$.   This choice leads to
     $$   u_{opt}(w_0, \epsilon)  \geq \int_0^\infty e^{-\rho t} \frac{1}{1-R } \frac{k^{1-R}}{(1+t)^{1-R} }   E^{\mu(\theta)}[(\widetilde{w}_t)^{1-R}] dt.  $$
   Now, $\mathbb{E}^{\mu(\theta)}[ (\widetilde{w}_t)^{1-R} e^{-\rho t}  ] = e^{   -  (1-R)\int_0^t \frac{k}{1+s}ds} = e^{ -(1-R)k \ln (1+t)}= \frac{1}{(1+t)^{k(1-R)}}$ when $t\rightarrow \infty$. Therefore, the   integrand is asymptotic to $ (1+t)^{-(k+1)(1-R)}$  and hence  the integral diverges if e.g., $ k = \frac{1}{1-R} -1 $.
  \end{itemize}
 \item  If $\gamma_\epsilon > 0 $, then the optimal processes/value are as given in the statement of the proposition.\\
       Substituting the candidate optimal controls $(\bar{\theta}_t,\bar{c}_t)$ into
(\ref{wcwealth}),  and solving for the candidate optimal wealth one gets \eqref{optwealth}, which can be further simplified to:

$$
\bar{w}_t = w_0 \exp{\left ( \pi_\epsilon   W_t + \frac{1}{R}( r-\rho - \frac{(H_\epsilon^+)^2}{2})t \right)}
$$
The process $\bar{w}$ is a deterministic scaling of  Geometric Brownian motion, as well as the process   $  \bar{v}_t: = (\bar{w}_t)^{1-R}$.  Now,
$$(1-R) \bar{Y}_t := (1-R) [ V(t, \bar{w}_t) + \int_0^t u(s, \bar{c}_s) ds] = \gamma_\epsilon^{-R}e^{-\rho t} \bar{v}_t   + \int_0^t e^{-\rho s} (\gamma_\epsilon)^{1-R} \bar{v}_s ds.   $$
has integrable maximal functional in every compact $[0,T]$ and by construction has zero drift term.  Henceforth, $\bar{Y}$  is a $P^{\bar{\mu},\sigma }$ martingale. \\
For what concerns other admissible controls $(\theta_t, c_t)$,   the process  $$ Y_t =   V(t, w_t) + \int_0^t u(s, c_s) ds $$
 with $w$ as in \eqref{wcwealth}, is  by construction a diffusion, which has the same sign as $(1-R)$, and which has  non positive drift under $P^{\mu(\theta),\sigma}$.
 \begin{itemize}
   \item  If $0<R<1$, then  any such $Y$ is positive.  By writing \eqref{ydiff} under $P^{\mu(\theta),\sigma}$, it is immediate to realize that  $Y$  is a  positive, decreasing scaling of a positive local martingale, namely  the Dol\'{e}ans exponential of  the process $X$ defined by:
        $$X:= \int_0^\cdot \gamma_\epsilon^{-R} \frac{e^{-\rho s} }{Y_s} w_s^{-R} \theta'_s\sigma dW. $$
        Henceforth,  $Y$ is  a  $P^{\mu(\theta),\sigma}$-supermartingale. In addition, $V(\infty, \cdot)  =u(\infty, \cdot) =0 $ so that the conditions of the Verification Theorem \ref{verif} are satisfied, and the proof in this case is complete.
   \item If $ R>1$, any $Y$ is negative. A simple modification of the   argument just used in the $0<R<1$ case only shows that $Y$ is  a \emph{local} supermartingale.   Therefore in this case  we show    the optimality of  $((\bar{\theta},\bar{c}), \bar{\mu} )$ in another way.  To this end, note that  the martingale property of $\bar{Y}$ as above, together with standard minimax inequalities,  gives
       \begin{eqnarray*}
          \mathbb{E}^{\bar{\mu}}[\int_0^\infty  e^{-\rho t} \frac{(\bar{c}_t)^{1-R}}{1-R} dt ]= \gamma_{\epsilon}\frac{ w_0^{1-R}}{1-R} \leq  u_{opt}(w_0, \epsilon)  \leq  \\
       \leq  \inf_{\mu \in \drifts} \sup_{\Arob} \mathbb{E}^{\mu}[  \int_0^\infty  e^{-\rho t} \frac{(c_t)^{1-R}}{1-R} dt ] \leq  \inf_{\mu \in U_\epsilon} \sup_{\Arob}\mathbb{E}^\mu[ \int_0^\infty  e^{-\rho t} \frac{(c_t)^{1-R}}{1-R} dt],
       \end{eqnarray*}
       so if we prove  that the first value on the left  is equal to the value of the last problem on the RHS, we are done. This is quite an easy task. In fact, for  a fixed constant $\mu \in U_\epsilon$ the inner supremum is a standard Merton problem. Hence,
       $$  \sup_{(\theta, c) \in \Arob} \mathbb{E}^\mu \left [ \int_0^\infty e^{-\rho t} \frac{(c_t)^{1-R}}{1-R}dt\right ] =  ( \gamma(\mu))^{-R} \frac{(w_0)^{1-R}}{1-R}    $$
       in which we pose $\gamma(\mu): =   \frac{\rho + (R-1)( r   + \frac{1}{2}\frac{(H (\mu))^2}{R})}{R}, $ with $H(\mu):= \sqrt{(\mu- r {\bf 1})' \Sigma ^{-1}(\mu - r {\bf 1})} $.  The residual minimization:
       $$ \inf_{\mu \in U_\epsilon} ( \gamma(\mu))^{-R} \frac{(w_0)^{1-R}}{1-R}   $$
       is then  a simple exercise, the minimizer being $\bar{\mu} $, so that $\gamma(\mu) = \gamma(\bar{\mu}) = \gamma_\epsilon$, which ends the proof.
 \end{itemize}
\end{enumerate}
\end{proof}

 Let us remark that  the optimal portfolio $\bar{\theta}$ preserves the
form of the Merton's Mutual Fund theorem. In fact, the optimal portfolio consists of an allocation between two fixed mutual funds, namely the riskless asset and the fund of risky assets given by $\Sigma^{-1}
(\hat \mu - r {\bf 1})$.  At each time point the optimal relative allocation of wealth is now dependent on the ambiguity aversion of
the investor in addition to his/her risk aversion through the coefficient:
$$
\frac{    H_\epsilon^+  }{R H}.
$$
 The above allocation naturally
collapses to the Merton allocation $ \frac{1}{R}$ for $\epsilon=0$. In case the radius of ambiguity
$\epsilon$ is greater than or equal to the market Sharpe ratio $H$, the optimal
control policy is not to invest at all into the risky assets.
 Since $ \frac{    H_\epsilon^+  }{R H} \leq \frac{1}{R} $,  the robust  Merton portfolio $\pi_\epsilon$
  has  smaller  positions in absolute value with respect to the classical Merton portfolio. To wit, both long and short positions are shrunk
with respect to the ambiguity-neutral portfolio. As expected, and  already anticipated in the Introduction,  robustness in the decisions lowers the optimal demand on equity, and thus offer a theoretical basis for a possible explanation of the equity premium  puzzle. \\
\indent  The consumption in the ambiguity averse case may be increased or curtailed, depending on the sign of $R-1$. In fact, when  the classical problem and its  ambiguity averse counterpart with $\epsilon >0$ are both well posed,   if $0<R<1$ then $\gamma_\epsilon > \gamma_0>0$, while if $R>1$  the opposite inequality chain holds.

\subsection{The finite horizon planning for non ambiguous $\sigma$}
Now the investor has a   CRRA power utility both from intertemporal and terminal consumption at time $T<\infty$:
$$
u(t,w) = \mbox{e}^{-\rho t} \frac{w^{1-R}}{1-R} \text{ for } 0\leq t< T \, \text{ and } u(T,w)=A \frac{w^{1-R}}{1-R}
$$
in which $A$ is a fixed positive constant. Here, we set the deterministic scaling of the CRRA power utility identical to that of the infinite horizon case to better highlight the similarities, but everything stated below holds also if $e^{-\rho t}$ is replaced by an integrable, positive and deterministic function $h(t)$.  We then wish to find the solution of:
\begin{equation}\label{mer2-nosig}
u_{opt}(w_0, \epsilon) =    \sup_{(\theta,c) \in {\Arob} } \inf_{\mu \in \mathcal{U}_\epsilon} \mathbb{E}^{\mu} \left[
\int_0^T \mbox{e}^{-\rho s} \frac{c_s^{1-R}}{1-R} ds +  A \frac{w_T^{1-R}}{1-R}\right],
\end{equation}
 Using the scaling properties of the CRRA utility,  the guess to the
  value function is  of the  form  $V(t,w) = f(t) \frac{w^{1-R}}{1-R}$ for some positive, differentiable function satisfying $f(T)=A$.  The  HJB equation \eqref{hjb1} now looks like
$$
\max_{(\theta,c) \in \mathbb{R}^n \times \mathbb{R}^+} \!\! \left[e^{-\rho t}\frac{c^{1-R}}{1-R} \!\!+ \!\!f'(t)\frac{w^{1-R}}{1-R}\!\! +\!\! f(t) w^{-R}(r w \!+\! \theta'(\hat \mu \!-\!r{\bf 1}) \!-\! \epsilon \sqrt{\theta'\Sigma\theta} \!-\! c) \!-\!\frac{R}{2}f(t)w^{-R-1} {\theta'\Sigma\theta}   \right]\!\!=\!0.
$$
Proceeding exactly as in the previous section, one obtains
$$
\bar{c}(t, w)  = w \left(\frac{e^{-\rho t}}{f(t)}\right)^{1/R} \;  \bar{\theta} = w \pi_\epsilon.
$$
Substituting the above  back into the HJB equation results in   a first order  ODE for $f$:
$$ \left \{ \begin{array}{c }
     f'(t) + k_\epsilon f(t) + R e^{-\frac{\rho}{R}t}(f(t))^{1-\frac{1}{R}}=0  \\
     f(T)= A
   \end{array}\right.
 $$
with
$$k_\epsilon:= (1-R) \left (r+  \pi_\epsilon'(\hat\mu - r{\bf 1}) -\epsilon \sqrt{\pi_\epsilon' \Sigma \pi_\epsilon} - \frac{R}{2} \pi'_\epsilon \Sigma \pi_\epsilon \right ) = (1-R) (r + \frac{(H_\epsilon^+)^2}{2R}).$$
With the substitution $f(t) = g(t)^R$, the ODE can be linearized and easily solved:
$$
g(t) =  A^{\frac{1}{R}} \exp{\left(\frac{k_\epsilon}{R}(T-t)\right)} + e^{-\frac{k_\epsilon}{R}t} \int_t^T \exp{\left (\frac{k_\epsilon-\rho}{R}s \right )}ds.
$$

Comparing this to the solution of \cite[Section 2.1]{rogers} the only changes are: 1)
the constant $k_\epsilon$, in which  $H^2$   is replaced by $(H^+_\epsilon)^2$ and 2) the optimal portfolio allocation, which is identical to the robust allocation case of the previous section. Obviously for an ambiguity neutral investor with
$\epsilon = 0$ we fall back to the finite horizon solution of the  Merton problem.
 \\
 \indent Let us conclude by summing up the results just found, leaving the verification to the reader.
\begin{prop}
 The finite horizon robust Merton problem under ellipsoidal ambiguity of mean returns
 $$
u_{opt}(w_0, \epsilon) =    \sup_{(\theta,c) \in {\Arob} } \inf_{\mu \in \drifts} \mathbb{E}^{\mu} \left[
\int_0^T \mbox{e}^{-\rho s} \frac{c_s^{1-R}}{1-R} ds +  A \frac{w_T^{1-R}}{1-R}\right],
$$
is always well-posed, and  admits the optimal controls:
$$
\begin{array}{cc }
\bar{\theta}_{t }= &  \frac{ \bar{w}_t  H^+_\epsilon  }{R H}  \Sigma ^{-1} (\hat \mu - r {\bf 1}) = \bar{w}_t \pi_\epsilon\\
 \bar{c}_{t} = & \bar{w}_t\frac{e^{-\frac{\rho}{R} t}}{g(t)} \hfill
\end{array}
$$
where
$$
g(t) =  A^{\frac{1}{R}} \exp{\left(\frac{k_\epsilon}{R}(T-t)\right)} + e^{-\frac{k_\epsilon}{R}t} \int_t^T \exp{\left (\frac{k_\epsilon-\rho}{R}s \right )}ds ,
$$
and $k_\epsilon = (1-R) (r + \frac{(H_\epsilon^+)^2}{2R})$.  The optimal  $\bar{\mu} = \mu(\bar{\theta})= \hat \mu -\epsilon \frac{\Sigma}{\sqrt{\pi'_\epsilon \Sigma\pi_\epsilon}} \pi_\epsilon, $ and the optimal wealth process $\bar{w}$  has   dynamics  under $P^{\bar{\mu},\sigma}$ given by:
$$
\bar{w}_t =w_0 \exp{\left[  \left( r + \frac{(H_\epsilon^+)^2}{2R^2}(R-1)\right) t + \int_0^t \frac{e^{-\frac{\rho}{R}s}}{g(s)}ds  + \pi_\epsilon'\sigma W_t \right]}.
$$
\end{prop}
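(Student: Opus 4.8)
The construction preceding the statement already supplies the candidate value function $V(t,w)=f(t)\frac{w^{1-R}}{1-R}$ with $f=g^{R}$, together with the candidate optimal feedback controls $\bar c=w\,e^{-\rho t/R}/g(t)$ and $\bar\theta=w\,\pi_\epsilon$; what remains is the verification, and the plan is to check items~1--3 of Theorem~\ref{verif}, following the pattern of Proposition~\ref{prop1} but more lightly, since the finite horizon makes well-posedness automatic. Indeed, for any \emph{fixed constant} $\mu\in U_\epsilon(\sigma)$ one has $\uopt(w_0,\epsilon)\le\sup_{(\theta,c)\in\Arob}\mathbb{E}^{\mu}[\int_0^{T}e^{-\rho s}\frac{c_s^{1-R}}{1-R}ds+A\frac{w_T^{1-R}}{1-R}]$, whose right-hand side is the finite value of a classical finite-horizon Merton problem (recall that here $\Arob=\mathcal{A}^{P^{\hmu,\sigma}}(w_0)$), while the admissible pair $(\bar\theta,\bar c)$ shows $\uopt(w_0,\epsilon)>-\infty$. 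Requirement~1 of Theorem~\ref{verif} is then immediate: $g$ is strictly positive and $C^{1}$ on $[0,T]$ with $g(T)=A^{1/R}$, so $f=g^{R}$ is positive and $C^{1}$, $V$ is $C^{1,2}$ on $[0,T)\times\mathbb{R}^{+}$ and $V(T,\cdot)=A\frac{\cdot^{\,1-R}}{1-R}=u(T,\cdot)$.

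For requirement~2, fix an admissible $(\theta,c)$ and take the pointwise worst-case drift $\mu(\theta_t)=\hmu-\epsilon\,\Sigma\theta_t/\sqrt{\theta_t'\Sigma\theta_t}$ when $\theta_t\neq0$ (and any point of $U_\epsilon(\sigma)$ otherwise), the minimiser computed just before \eqref{hjb0}; exactly as in Proposition~\ref{prop1} this realizes the inner infimum of \eqref{mer2-nosig}. Substituting $\bar c$ together with this drift into \eqref{ydiff} and using that $f=g^{R}$ solves the stated ODE---so that the bracket in \eqref{hjb1} vanishes at the maximiser $(\bar\theta,\bar c)$ and is $\le0$ at every $(\theta,c)$---shows that $Y_t=V(t,w_t)+\int_0^{t}u(s,c_s)ds$ has non-positive drift under $P^{(\mu(\theta),\sigma)}$, with martingale part $\int_0^{\cdot}f(s)w_s^{-R}\theta_s'\sigma\,dW$; hence $Y$ is a local supermartingale. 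When $0<R<1$ one has $f>0$, $1-R>0$, $u\ge0$, so $Y\ge0$, and a non-negative local supermartingale is a supermartingale.

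For requirement~3, inserting $(\bar\theta,\bar c)$ into \eqref{wcwealth} yields the explicit optimal wealth $\bar w$ of the statement, a deterministic scaling of a geometric Brownian motion; consequently $\bar w_t^{1-R}$ and $\bar Y_t=f(t)\frac{\bar w_t^{1-R}}{1-R}+\int_0^{t}u(s,\bar c_s)ds$ have integrable maximal functional on $[0,T]$, and $\bar Y$ has zero drift under $P^{(\bmu,\sigma)}$ by construction, hence is a true $P^{(\bmu,\sigma)}$-martingale. For $0<R<1$, Theorem~\ref{verif} now applies and the proof is complete. For $R>1$ the process $Y$ is negative, so being a local supermartingale no longer forces the supermartingale property; instead one argues, exactly as in the $R>1$ step of Proposition~\ref{prop1}, through the chain
\[
g(0)^{R}\frac{w_0^{1-R}}{1-R}=\mathbb{E}^{\bmu}\!\big[I(\bar\theta,\bar c)\big]\le\uopt(w_0,\epsilon)\le\inf_{\mu\in\drifts}\sup_{(\theta,c)\in\Arob}\mathbb{E}^{\mu}\!\big[I(\theta,c)\big]\le\inf_{\mu\in U_\epsilon(\sigma)}\sup_{(\theta,c)\in\Arob}\mathbb{E}^{\mu}\!\big[I(\theta,c)\big],
\]
where $I(\theta,c)$ is the payoff appearing in \eqref{mer2-nosig} and the last infimum is over constant drifts. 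For constant $\mu$ the inner supremum is a classical finite-horizon Merton problem with value $g_\mu(0)^{R}\frac{w_0^{1-R}}{1-R}$, $g_\mu$ solving the same ODE with $H$ replaced by $H(\mu)=\sqrt{(\mu-r\1)'\Sigma^{-1}(\mu-r\1)}$; since $g_\mu(0)$ is monotone in $H(\mu)^{2}$ and $\min_{\mu\in U_\epsilon(\sigma)}H(\mu)=(H-\epsilon)^{+}=H_\epsilon^{+}$, attained at $\bmu$, the rightmost quantity equals $g(0)^{R}\frac{w_0^{1-R}}{1-R}=V(0,w_0)$, and the leftmost equality follows because, under the feedback controls, the analogue of $Y$ is a non-positive local submartingale, hence a genuine submartingale, under every $P^{(\mu,\sigma)}$, with equality at $\bmu$; thus the chain collapses to $\uopt(w_0,\epsilon)=V(0,w_0)$.

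I expect the main obstacle to be, as in Proposition~\ref{prop1}, the case $R>1$: because $V(t,w)\to-\infty$ as $w\downarrow0$ there is no a priori lower bound on $Y$, so the convenient implication ``non-negative local supermartingale $\Rightarrow$ supermartingale'' is unavailable, and one must route the upper bound through the minimax chain; this in turn forces one to (i) confirm that each inner problem really is the classical Merton problem over $\Arob=\mathcal{A}^{P^{\hmu,\sigma}}(w_0)$, (ii) check the monotonicity of $g_\mu(0)$ in $H(\mu)^{2}$, and (iii) identify $\bmu$ both as the ellipsoidal minimiser of $H(\cdot)$ and as the inner minimiser at the optimal controls. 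The remaining pieces---the linearization $f=g^{R}$, the integrability of $\bar Y$ from the explicit geometric-Brownian form, and checking that $\mu(\theta)$ attains the inner infimum for general admissible $(\theta,c)$---are routine bookkeeping.
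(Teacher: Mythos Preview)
Your proposal is correct and follows precisely the approach the paper intends: the paper leaves the verification of Proposition~2 to the reader, pointing back to the template of Proposition~\ref{prop1}, and you have carried out exactly that template---checking items~1--3 of Theorem~\ref{verif} for $0<R<1$ via the non-negative local supermartingale argument, and routing the $R>1$ case through the minimax chain with constant drifts. If anything, you are more careful than the paper in one place: you explicitly justify the inequality $\mathbb{E}^{\bmu}[I(\bar\theta,\bar c)]\le\uopt(w_0,\epsilon)$ by noting that $\bar Y$ is a non-positive local submartingale (hence a true submartingale) under every $P^{(\mu,\sigma)}$, whereas the paper's proof of Proposition~\ref{prop1} takes this step for granted.
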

\section{Examples with ambiguous $\sigma$} \label{ambsig}
In all the following examples, the volatilities are square, full rank, matrices.

\begin{ex}[The uncorrelated case]
 Suppose that  estimated volatility matrix   $\hsig$ is  diagonal. To wit, the risky assets returns are   (instantaneously) uncorrelated.   Further, we suppose that the ambiguity does not affect  correlations, namely the   ambiguity set $S$  is that of diagonal  matrices, whose diagonal $\Sigma$    lies  in some product
 $$ [\underline{\sigma}^2_1, \overline{\sigma}^2_1] \times \ldots  [\underline{\sigma}^2_n, \overline{\sigma}^2_n],$$
  with $\inf_{i} \underline{\sigma}_i>0$  and $\underline{\sigma} _i \leq \hsig_i \leq \overline{\sigma}_i $.   This is exactly the case examined by Lin and Riedel \cite{riedel}, where the problem is treated via a G-Brownian motion technique. \\
  \indent For a fixed  ambiguity radius $\epsilon >0$  on the drift,   the  solution of the residual inner minimization over $\sigma$ in the max-min HJB  \eqref{hjb0} becomes a triviality with  this diagonal uncertainty specification.   The unique worst case volatility  is  constant and it is the  `highest' one,     $\bsig = Diag(\overline{\sigma}_1, \ldots,  \overline{\sigma}_n)$ and does not  depend on $ \theta$.  Therefore, the  general problem  \eqref{robmer} becomes equivalent to a robust utility maximization with   volatility   $\bsig$  and  ellipsoidal uncertainty on the drift only, with radius $\epsilon$.  To give an explicit example, in the power utility case one ends up in solving  \eqref{mer1-nosig} or  \eqref{mer2-nosig} with $\sigma = \bsig$.  It is clear then that the verifications are identical to the ones just seen in the previous Section.
The resulting optimal relative portfolio  is  also  constant:
  $$ \pi_\epsilon (\bsig) =  \frac{\overline{H}^+_\epsilon  }{R \overline{H} }  {(\overline{\Sigma}})^{-1} (\hat \mu - r {\bf 1})$$
    in which $ \overline{H} = \sqrt{(\hat \mu - r {\bf 1})'\overline{\Sigma} ^{-1} (\hat \mu - r {\bf 1})}, \overline{H}^+_\epsilon  = (\overline{H} -\epsilon)^+$.
\end{ex}

\begin{ex}[Upper bound  on the quadratic form $\Sigma$]
This example can be seen as relaxation of the previous one, in the sense that we do not impose constraints separately on  each of  the eigenvalues of $\Sigma$, nor we assume that $\Sigma$ is diagonal.  We simply restrict the quadratic form induced by $\Sigma$ not to exceed a given  threshold $\overline{\lambda}^2>0$ on the unit sphere, with $\overline{\lambda} \geq \hat \lambda_M$, the latter being the maximum eigenvalue of $\hat \sigma$. This amounts to imposing the same   bound on the maximum eigenvalue of $\Sigma$.
Precisely,   the volatility is assumed to be valued in
  $$ {S}:= \{ \sigma \in \mathbb{R}^{n\times n} \mid \,   0<  x' \Sigma  x \leq \overline{\lambda}^2 \|x\|^2  \text{ for all } x \in \mathbb{R}^n, x \neq 0 \}  $$
The    minimizers $\bsig$  of the inner minimization in \eqref{hjb0}  are volatilities $\sigma$  with  maximum eigenvalue equal to  $\overline{\lambda}  $, and such that $\theta$ is an eigenvector relative to $\overline{\lambda}$. Finally, the robust HJB boils down to the concave maximization:
   $$ \max_{(\theta, c)\in \mathbb{R}^n\times \mathbb{R}^+} \left [ u(t,c) +V_t +V_w (r w +\theta'(\hmu-r \mathbf{1}) -\epsilon \overline{\lambda}\| \theta\|-c) + \frac{1}{2}\overline{\lambda}^2 \|\theta\|^2 V_{ww}\right] =0 $$
 Therefore,  the ambiguous volatility problem  is observationally equivalent to a   Merton problem with    volatility matrix   equal to $\overline{\lambda} I $,  and drift uncertainty radius $ \epsilon   $. So, from  here one proceeds exactly  as in Section 3. The optimal  relative portfolio is thus
 $$ \overline{\pi}_{\epsilon} (\bsig) =  \frac{(\frac{1}{\overline{\lambda}} \|\hmu - r \mathbf{1} \| -  \epsilon )^+ }{ \frac{R}{\overline{\lambda}}\, \|\hmu - r \mathbf{1} \|}\frac{1}{\overline{\lambda}^2 } (\hmu - r \mathbf{1})=
 \frac{( \|\hmu - r \mathbf{1} \| -\overline{\lambda} \epsilon )^+ }{R\,  \|\hmu - r \mathbf{1} \|}\frac{1}{\overline{\lambda}^2 } (\hmu - r \mathbf{1}) $$
\end{ex}
\vspace{0.4cm}
\begin{rem} Another interesting case of ambiguity specification on
volatility is based on defining a ball around an estimate of the variance/covariance matrix. However, it does not
lead to a closed form portfolio rule.

  The ambiguity on volatility is
represented here as membership to the set $S = \{ \Sigma \succeq 0: \|\Sigma - \hat{\Sigma}\|_F \leq \delta\}$ for some estimate $\hat{\Sigma}$ of the variance/covariance
matrix\footnote{It is more convenient to work on $\Sigma$ rather than on $\sigma$; $\|Y\|_F$ denotes the Frobenius norm of a $m \times n$ matrix $Y$, and is equal to $\mbox{Tr}(Y' Y)$.}, and a positive parameter $\delta$.  The inner
optimization problem
$  \max_{  \Sigma \in  {S}} V_w \epsilon \sqrt{\theta'\Sigma\theta}     -\frac{  V_{ww} }{2} {\theta'\Sigma\theta} $
is equivalently posed as the matrix optimization problem
in the space of symmetric $n \times n$ matrices:
$$  \max_{  \Sigma \in  {S}}  V_w \epsilon \sqrt{\langle \theta \theta', \Sigma\rangle }   -\frac{  V_{ww} }{2} {\langle \theta \theta', \Sigma\rangle}   $$
where $\langle X,Y\rangle = \mbox{Tr}(X^T Y)$ is the trace product (inner product in the space of symmetric $n \times n$ matrices). Denoting 
 rank-one matrix $\theta \theta'$ (obtained from the dyadic product) as $\Theta$, and passing to variables $X = \Sigma - \hat{\Sigma}$ we have the equivalent quadratically constrained   optimization (convex) problem in matrix variable $X$:
$$  \max_{ X}  \beta(\epsilon) \sqrt{\langle \Theta, X \rangle + \langle \Theta, \hat{\Sigma} \rangle}  +    \alpha \langle \Theta, X \rangle  $$
subject to
$$
\langle X,X \rangle \leq \delta^2
$$
where we defined $\alpha = -\frac{  V_{ww} }{2}$ and $\beta(\epsilon) = V_w \epsilon$ for convenience  and we omitted momentarily the term $\alpha \langle \Theta, \hat{\Sigma} \rangle$.

From the first-order optimality conditions, after some straightforward algebra we obtain the following optimal
(worst-case) matrix
$$
\bar{\Sigma} = \hat{\Sigma}+ \frac{1}{2 \bar{\lambda}}(\alpha + \frac{\beta}{2\xi}) \Theta
$$
where $A = \langle \Theta, \hat{\Sigma} \rangle$ and $B=\|\Theta\|_F$,
$
\xi = \sqrt{A + \delta B}$,
$\bar{\lambda} = \frac{1}{4}\frac{(2 \alpha \sqrt{A + \delta B} + \beta)B}{\delta\sqrt{A + \delta B}}$.
Note that $\bar{\Sigma}$ is positive definite.
 Unfortunately, substitution of the worst-case matrix into the robust HJB equation (\ref{hjb0}) yields a fourth-order polynomial
which does not lead to a closed-form portfolio rule. Nonetheless, a numerical
procedure can be used to find the optimal portfolio rule.
\end{rem}

\small


\begin{thebibliography}{100}
\bibitem{a} A. Abel, 2002, An Exploration of the Effects of Pessimism and Doubt on Asset Returns, {\em Journal of Economic Dynamics and Control}, 26 1075-1092.
\bibitem{clm} S. Cecchetti, P. Lam and N. Mark, 2000, Asset Pricing with Distorted Beliefs: Are Equity Returns too Good to be True?, {\em American Economic Review}, 90, 787-805.
\bibitem{chenep} Z. Chen and L. Epstein, 2002, Ambiguity, Risk and Asset Returns in Continuous Time, \emph{Econometrica} 70(4), 1403-1443.
\bibitem{delageye} E. Delage and Y. Ye, 2010, Distributionally Robust Optimization under Moment Uncertainty with Application to Data-Driven Problems, {\em Operations Research}, 58(3), 596--612.

\bibitem{demig1} V. DeMiguel and F. Nogales, 2009, Portfolio Selection with Robust Estimation, {\em Operations Research}, 57(3), 560--577.
\bibitem{fabbook} F.J. Fabozzi,  P.N. Kolm, D.A. Pachamanova, and S.M. Focardi, 2007, {\em Robust Portfolio Optimization and Management}, John Wiley \& Sons, New York.
\bibitem{fab2} F.J. Fabozzi, D. Huang and G. Zhou, 2010, Robust Portfolios: Contributions from Operations
Research and Finance, {\em Annals of Operations Research}, 176, 191--220.
\bibitem{fabr} A. Fabretti, S. Herzel and M. \c{C}. P{\i}nar, 2014,
Delegated Portfolio Management under Ambiguity Aversion, {\em
Operations Research Letters},  42(2), 190--195.
\bibitem{hs} D. Hernandez-Hernandez and A. Schied, 2006, Robust Utility Maximization in a Stochastic Factor Model, {\em Statistics \& Decisions}, 24, No. 3, 109--125.
\bibitem{fsw} H. F\"{o}llmer, A. Schied, and S. Weber, 2009, Robust Preferences and Robust Portfolio Choice, in {\em Mathematical Modelling and Numerical Methods in Finance}, (P. Ciarlet, A. Bensoussan, Q. Zhang, Eds.) Handbook of Numerical Analysis 15, 29-88.
 \bibitem{gar} L. Garlappi, R. Uppal and T. Wang, 2007, Portfolio
  Selection with Parameter and Model Uncertainty: A Multi-Prior
  Approach, {\em Review of Financial Studies}, vol. 20, no. 1, 41--81.
  \bibitem{goldiyen}  D. Goldfarb and G. Iyengar,  2003,
Robust Portfolio Selection Problems,
{\em Mathematics of Operations Research,} 28(1), pp. 1–38.
\bibitem{maenhout} J.P. Maenhout, 2004, Robust Portfolio Rules and Asset Pricing,  {\em Review of Financial Studies}, 17, 951--983.
\bibitem{mp} R. Mehra and E.C. Prescott,  1985, The Equity Premium: A Puzzle, {\em Journal of Monetary Economics}, 1985.
\bibitem{nutz} M. Nutz. Utility Maximization under Model Uncertainty in Discrete Time, \emph{Mathematical Finance}, forthcoming.
\bibitem{owari} K. Owari, 2011 Robust Utility Maximization with Unbounded Random Endowment, {\em Advances in Mathematical Economics}, 14, pp. 147-181.
\bibitem{pinatut} M.\c{C}. P{\i}nar and R. T\"ut\"unc\"u, 2005,
Robust Profit Opportunities in Risky Financial Portfolios,
{\em Operations Research Letters}, 33, 331--340.
\bibitem{pop} I. Popescu, 2005, Robust Mean-Covariance Solutions for
Stochastic Optimization, {\em Operations Research}, 55(1), 98--112.
\bibitem{riedel} F. Riedel and Q. Lin, 2014, {\em Optimal Consumption and Portfolio Choice
with Ambiguity}, Working paper, Center for Mathematical Economics, University of Bielefeld.
 \bibitem{rogers} L.C.G. Rogers, 2013, {\em Optimal Investment}, Springer-Verlag, Berlin.
\end{thebibliography}
\end{document}